\newtheorem{assumption}{\bf Assumption}
\newtheorem{definition}{\bf Definition}
\newtheorem{proposition}{\bf Proposition}
\newtheorem{lemma}{\bf Lemma}
\newcommand{\prob}[1]{\mathrm{Prob}\left[#1\right]}
\newcommand{\expect}[1]{\mathbb{E}\left[#1\right]} 
\newcommand{\tr}[1]{\mathrm{tr}\left(#1\right)}
\title{Recursively feasible stochastic predictive control using an interpolating initial state constraint \\ - extended version}
\author{Johannes K\"ohler, Melanie N. Zeilinger%
\thanks{Institute for Dynamic Systems and Control, ETH Zürich, Zürich CH-8092, Switzerland (e-mail:
[jkoehle$|$mzeilinger]@ethz.ch).}
}
\begin{document}
\IEEEoverridecommandlockouts
\IEEEpubid{\begin{minipage}{\textwidth}\ \\[12pt] \\ \\
         \copyright 2022 IEEE.  Personal use of this material is  permitted.  Permission from IEEE must be obtained for all other uses, in  any current or future media, including reprinting/republishing this material for advertising or promotional purposes, creating new  collective works, for resale or redistribution to servers or lists, or  reuse of any copyrighted component of this work in other works.
     \end{minipage}}
\maketitle

%
%%%%%%%%%%%%%%%%%%%%%%%%%%%%%%%%%%%%%%%%%%%%%%%%%%%%%%%%%%%%%%%%%%%%%%%%%%%%%%%%
\begin{abstract} 
We present a stochastic model predictive control (SMPC) framework for linear systems subject to possibly unbounded disturbances. 
State of the art SMPC approaches with closed-loop chance constraint satisfaction recursively initialize the nominal state based on the previously predicted nominal state or possibly the measured state under some case distinction.
We improve these initialization strategies by allowing for a continuous optimization over the nominal initial state in an interpolation of these two extremes. 
The resulting SMPC scheme can be implemented as one standard quadratic program and is more flexible compared to state-of-the-art initialization strategies.  
As the main technical contribution,  we show that the proposed SMPC framework also ensures closed-loop satisfaction of chance constraints and suitable performance bounds. 
\end{abstract} 
%\begin{IEEEkeywords} Predictive control for linear systems, 
% Stochastic optimal control \end{IEEEkeywords}
%!TEX root = ./SMPC_reach.tex
%%%%%%%%%%%%%%%%%%%%%%%%%%%%%%%%%%%%%%%%%%%%%%%%%%%%%%%%%%%%%%%%%%%%%%%%%%%%%%%
\section{Introduction}
Many challenging control problems are characterized by the need for performance and safety guarantees in the presence of model uncertainty. 
Model predictive control (MPC) is an optimization-based control method that accounts for general constraints and performance criteria~\cite{kouvaritakis2016model}. 
While there exist well-established methods to robustly account for model uncertainties in MPC, stochastic MPC (SMPC) approaches can take probabilistic information into account to reduce conservatism by allowing for a small user-defined probability of constraint violation~\cite{kouvaritakis2016model,farina2016stochastic,mesbah2016stochastic}. 
In this paper, we focus on the treatment of the initial state constraint in a receding horizon SMPC implementation and the resulting closed-loop properties in the form of performance and chance constraint satisfaction. 

\subsubsection*{Related work}
The reformulation of stochastic optimal control problems in terms of deterministic surrogates has been extensively studied in the literature considering: randomized methods using scenarios~\cite{schildbach2014scenario,lorenzen2016constraint,hewing2019scenario}, analytical reformulations~\cite{hewing2018stochastic,hewing2020recursively,paulson2020stochastic,farina2016stochastic,oldewurtel2008tractable,magni2009stochastic}, disturbance feedback optimization~\cite{oldewurtel2008tractable,magni2009stochastic,korda2014stochastic,yan2020stochastic,paulson2020stochastic}, or polynomial chaos expansion~\cite{mesbah2016stochastic,muhlpfordt2017comments}. 
The issue of recursive feasibility in SMPC formulations was early recognized~\cite{primbs2009stochastic}. 
Simple modifications include the relaxation of the corresponding chance constraints using penalties~\cite{oldewurtel2008tractable,paulson2020stochastic}, minimizing the probability of constraint violation~\cite{brudigam2021minimization}, (iteratively) adjusting the probability level in the chance constraints~\cite{magni2009stochastic,wang2021recursive}, or considering weaker discounted/weighted average probabilistic constraints~\cite{korda2014stochastic,yan2020stochastic}. 

By explicitly enforcing \textit{robust} recursive feasibility, closed-loop properties (performance, chance constraint satisfaction) can be ensured for SMPC schemes with chance constraints~\cite{cannon2010stochastic,kouvaritakis2010explicit,lorenzen2016constraint}.   
However, these approaches are not applicable to unbounded disturbances (e.g. Gaussian). 

In~\cite{farina2013probabilistic}, it was suggested to use a \textit{case distinction} to set the initial condition to the new measured state or the previous uncertain prediction. 
The same initialization strategy is also considered in various SMPC extensions~\cite{farina2015approach,farina2016model,li2021distributionally} and comparable re-set strategies are also used for stochastic reference governors~\cite{li2021chance}. 
In~\cite{hewing2018stochastic}, it was shown that under additional symmetry and unimodality conditions, a modified version of this SMPC formulation also satisfies the desired chance constraints in closed loop.
%\footnote{%
%As discussed in \cite[Remark 1]{farina2016model}, \cite[Remark 7]{li2021distributionally}, or shown in~\cite[Sec.~V]{hewing2018stochastic}, the case distinction SMPC \cite{farina2013probabilistic,farina2015approach,farina2016model,li2021distributionally} does in general \textit{not} result in closed-loop chance constraint satisfaction, compare also the discussion in Section~\ref{sec:discussion_proof}.} 
This paradigm was also adopted in recent SMPC extensions, e.g., \cite{mark2020stochastic,mark2021stochastic}.

To ensure closed-loop chance constraint satisfaction under more general conditions, the \textit{indirect feedback} SMPC was suggested in~\cite{hewing2020recursively}, where the feasible set is independent of the realized disturbances.  
Further extensions of this SMPC paradigm can be found in~\cite{hewing2019scenario,mark2021data} and a comparison to the \textit{case distinction} SMPC is given in~\cite{hewing2020performance}. 
Recently, a related initial state optimization has also been independently proposed in~\cite{schluter2022stochastic}.  
\subsubsection*{Contribution}
\IEEEpubidadjcol
In this paper, we improve the treatment of the initial state constraint in state-of-the-art SMPC formulations, primarily~\cite{hewing2018stochastic} and also \cite{hewing2020recursively}. 
We study linear systems with possibly unbounded disturbances subject to chance constraints on states and inputs. 
We relax the binary initial state constraint considered in~\cite{hewing2018stochastic} by allowing for a continuous interpolation of the nominal initial state between the new measured state and previous nominal prediction. 
The resulting SMPC scheme only requires the solution to one quadratic program (QP). 
As the main technical contribution, we prove that the proposed SMPC scheme ensures closed-loop satisfaction of the chance constraints, thus extending the results in~\cite{hewing2018stochastic}\footnote{%
The result applies if the tightened constraints are constructed using convex \textit{symmetric} probabilistic reachable sets and the distributions are (centrally convex) unimodal (e.g. Gaussian), which corresponds to the conditions in~\cite{hewing2018stochastic}.}.
Additionally, we significantly improve the performance guarantees in~\cite[Thm.~2]{hewing2018stochastic} by adopting the more direct cost function proposed in~\cite{hewing2020recursively}. 
Furthermore, we provide a comparison between the proposed SMPC formulation, the case distinction approach~\cite{hewing2018stochastic} and the indirect feedback formulation~\cite{hewing2020recursively} using an illustrative example.
In particular, these two SMPC formulations can be viewed as special cases of the considered interpolating initial state constraint, which explains why the proposed SMPC formulation is more flexible.  
\subsubsection*{Outline}
First, the problem setup and the proposed SMPC formulation are presented (Sec.~\ref{sec:problem}) and the closed-loop properties, i.e., recursive feasibility, chance constraint satisfaction, and performance, are shown (Sec.~\ref{sec:theory}).
Then, we compare the proposed approach with the SMPC approaches in~\cite{hewing2018stochastic} and \cite{hewing2020recursively} (Sec.~\ref{sec:discussion}) and conclude the paper (Sec.~\ref{sec:conclusion}).
This paper is an extended version of the paper~\cite{kohler2022recursively}, containing a detailed proof and an additional example in Appendix~\ref{app:performance} and \ref{app:example_symmetric}, respectively.

\subsubsection*{Notation}
The set of integers in the interval $[a,b]\subseteq\mathbb{R}$ is denoted by $\mathbb{I}_{[a,b]}$. 
The probability and the conditional probability are denoted by $\prob{A}$, $\prob{A|B}$. 
The expected value of a random variable $x$ is denoted by $\expect{x}$. 
We denote a variable $w$ with Gaussian distribution with mean $\mu$ and variance $\Sigma$ by $w\sim\mathcal{N}(\mu,\Sigma)$. 
The trace of a square matrix $A$ is denoted by $\tr{A}$. 
The Minkowski sum and the Pontryagin difference of two sets $\mathcal{S},\mathcal{T}$ are given by $\mathcal{S}\oplus\mathcal{T}=\{s+t|s\in\mathcal{S},~t\in\mathcal{T}\}$ and $\mathcal{S}\ominus\mathcal{T}=\{x|x+t\in\mathcal{S},\forall t\in\mathcal{T}\}$, respectively.

%!TEX root = ./SMPC_reach.tex
%%%%%%%%%%%%%%%%%%%%%%%%%%%%%%%%%%%%%%%%%%%%%%%%%%%%%%%%%%%%%%%%%%%%%%%%%%%%%%% 
\section{Problem setup and proposed SMPC formulation}
\label{sec:problem}
We first present the problem setup, the proposed SMPC formulation, and recall some properties of unimodal distributions. 
%!TEX root = ./SMPC_reach.tex
%%%%%%%%%%%%%%%%%%%%%%%%%%%%%%%%%%%%%%%%%%%%%%%%%%%%%%%%%%%%%%%%%%%%%%%%%%%%%%% 
\subsection{Setup}
We consider a linear time-invariant (LTI) system
\begin{align}
\label{eq:system}
x(k+1)=Ax(k)+Bu(k)+w(k),\quad  x(0)=x_0,
\end{align}
with state $x(k)\in\mathbb{R}^n$, input $u(k)\in\mathbb{R}^m$, time $k\in\mathbb{I}_{\geq 0}$, and disturbances $w(k)\in\mathbb{R}^n$.
The disturbances are assumed to be independent and identically distributed (i.i.d.) with distribution $w\sim\mathcal{Q}_{\mathrm{w}}$, zero mean, and variance $\Sigma_{\mathrm{w}}\succ 0$. 
We assume that $(A,B)$  is stabilizable. 
Hence, we can choose a feedback $K\in\mathbb{R}^{m\times n}$, such that $A_K:=A+BK$ is Schur. 
We consider chance constraints on the state and input of the form 
\begin{align}
\label{eq:chance_constraints}
\prob{(x(k),u(k))\in\mathcal{Z}}\geq p,\quad k\in\mathbb{I}_{\geq 0},
\end{align}
with a polytope $\mathcal{Z}\subseteq\mathbb{R}^{n+m}$ and some probability level $p\in(0,1)$.   
Furthermore, we consider a  linear quadratic stage cost  
\begin{align}
\label{eq:ell}
\ell(x,u) =x^\top Q x+x^\top q+u^\top R u+u^\top r,
\end{align}
 with matrices $Q,R\succeq 0$. 
The control goal is the minimization of the closed-loop cost, while satisfying the chance constraints~\eqref{eq:chance_constraints}. 
Chance constraints~\eqref{eq:chance_constraints} relax the deterministic constraint satisfaction requirement from (robust) MPC~\cite{kouvaritakis2016model} and thus allow for more flexible operation. 
This relaxation also becomes necessary in case of possibly unbounded disturbances, e.g., Gaussian, compare~\cite{farina2016stochastic,mesbah2016stochastic} for a general introduction to SMPC. 
Additional hard input constraints  $u_k\in\mathcal{U}$, $k\in\mathbb{I}_{\geq 0}$ can be enforced by considering open-loop stable systems with $K=0$ or by assuming bounded disturbances $w_k\in\mathcal{W}$, $k\in\mathbb{I}_{\geq 0}$. 

%!TEX root = ./SMPC_reach.tex
%%%%%%%%%%%%%%%%%%%%%%%%%%%%%%%%%%%%%%%%%%%%%%%%%%%%%%%%%%%%%%%%%%%%%%%%%%%%%%% 
\subsection{Proposed SMPC formulation}
At time $k\in\mathbb{I}_{\geq 0}$, we consider the following SMPC problem:  
\begin{subequations}
\label{eq:MPC}
\begin{align}
\label{eq:MPC_cost}
\min_{v_{\cdot|k},z_{\cdot|k},\lambda_k}&\mathcal{J}_N(x(k),z_{\cdot|k},v_{\cdot|k},\lambda_k)\\
\label{eq:MPC_interpolation}
\text{subject to }& z_{0|k}=(1-\lambda_k)z_{1|k-1}^\star+\lambda_k x(k),~ \lambda_k\in[0,1],\\
\label{eq:MPC_nominal_dynamics}
&z_{i+1|k}=Az_{i|k}+Bv_{i|k},~ i\in\mathbb{I}_{[0,N-1]},\\
\label{eq:MPC_nominal_tightened}
&(z_{i|k},v_{i|k})\in\overline{\mathcal{Z}}_{k+i},\quad i\in\mathbb{I}_{[0,N-1]},\\
&z_{N|k}\in\mathcal{X}_{\mathrm{f}},
\end{align}
\end{subequations}
with some later specified open-loop cost $\mathcal{J}_N$ (Sec.~\ref{sec:theory_performance}), a (polytopic) terminal set constraint $\mathcal{X}_{\mathrm{f}}$ (Sec.~\ref{sec:theory_rec}), and tightened constraints $\overline{\mathcal{Z}}_k\subseteq\mathcal{Z}$, $k\in\mathbb{I}_{\geq 0}$ (Sec.~\ref{sec:theory_constraints}).  
The optimization variables correspond to a nominal predicted state and input sequence $z_{\cdot|k}$, $v_{\cdot|k}$, and an interpolating variable $\lambda_k\in[0,1]$ for the initial nominal state. 
The minimum is denoted by $\mathcal{J}_N^\star(x(k),z_{1|k-1}^\star)$ and a minimizer is denoted by $v^\star_{\cdot|k}$, $z^\star_{\cdot|k}$, $\lambda^\star_k$.
Note that Problem~\eqref{eq:MPC} at time $k\in\mathbb{I}_{\geq 0}$ also depends on the nominal state $z_{1|k-1}^\star$ predicted at time $k-1$, which is initialized with $z_{1|-1}^\star=x_0$, similar to the SMPC approaches in~\cite{farina2013probabilistic,farina2015approach,farina2016stochastic,hewing2018stochastic,hewing2020recursively}. 
In closed loop, we apply the nominal input with the tube controller $K$, i.e.,
\begin{align}
\label{eq:u}
u(k)=v^\star_{0|k}+K(x(k)-z_{0|k}^\star),\quad k\in\mathbb{I}_{\geq 0}.
\end{align}
Feedback with respect to the measured state $x(k)$ is introduced in~\eqref{eq:u} and in the initial state constraint~\eqref{eq:MPC_interpolation}, which is comparable to standard tube-based MPC approaches (both robust and stochastic~\cite{kouvaritakis2016model,farina2016stochastic,hewing2018stochastic,hewing2020recursively}).
Since the cost $\mathcal{J}_N$ is based on the convex quadratic stage cost $\ell$~\eqref{eq:ell} (Sec.~\ref{sec:theory_performance}) and the tightened constraints $\overline{\mathcal{Z}}_k$ are polytopes (Sec.~\ref{sec:theory_constraints}), Problem~\eqref{eq:MPC} is a QP that can be efficiently solved. 
The SMPC formulation uses a nominal prediction~\eqref{eq:MPC_nominal_dynamics} for the constraints~\eqref{eq:MPC_nominal_tightened} and the true measured state $x(k)$ affects the initial condition~\eqref{eq:MPC_interpolation} and the cost function~\eqref{eq:MPC_cost}. 
The variable $\lambda_k\in[0,1]$ in~\eqref{eq:MPC_interpolation} allows for a continuous interpolation between the two initialization strategies considered in~\cite{hewing2018stochastic}.  
In Section~\ref{sec:discussion}, we discuss in more detail how the proposed approach is related to the SMPC approaches based on \textit{case distinction}~\cite{farina2013probabilistic,farina2015approach,hewing2018stochastic} and \textit{indirect feedback}~\cite{hewing2020recursively} and provide an illustrative example.   

%!TEX root = ./SMPC_reach.tex
%%%%%%%%%%%%%%%%%%%%%%%%%%%%%%%%%%%%%%%%%%%%%%%%%%%%%%%%%%%%%%%%%%%%%%%%%%%%%%% 
\subsection{Unimodality}
\label{sec:theory_unimodal}
In the following, we recap some properties of unimodal distributions, which are used for the theoretical analysis.  
\begin{definition}
\label{def:monotone}
\cite[Def.~3.2]{dharmadhikari1976multivariate} 
A distribution $\mathcal{Q}_{\mathrm{w}}$ in $\mathbb{R}^n$ is called monotone unimodal if for every symmetric convex set $\mathcal{R}\subseteq\mathbb{R}^n$ and every $x\in\mathbb{R}^n$, the quantity $\prob{w+cx\in\mathcal{R}}$ is non-increasing in $c\in[0,\infty]$. 
\end{definition}
\begin{lemma}
\label{lemma:unimodal_monotone} (adapted from \cite[Lemma~1]{hewing2018stochastic})
Suppose $x,w\in\mathbb{R}^n$ are independent random variables, $\mathcal{R}\subseteq\mathbb{R}^n$ is a convex symmetric set, and the distribution of $w$ is monotone unimodal. 
Then,  $\prob{cx+w\in\mathcal{R}}\geq \prob{\tilde{c}x+w\in\mathcal{R}}$, $\forall \tilde{c}\geq c\geq 0$. 
\end{lemma}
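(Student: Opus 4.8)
The plan is to reduce the statement for a \emph{random} vector $x$ to the deterministic version encoded in Definition~\ref{def:monotone} by conditioning on $x$ and exploiting independence. First I would fix an arbitrary realization $x=\xi\in\mathbb{R}^n$. Since $x$ and $w$ are independent, the conditional law of $w$ given $x=\xi$ coincides with its unconditional law, so $\prob{cx+w\in\mathcal{R}\mid x=\xi}=\prob{c\xi+w\in\mathcal{R}}$, and likewise with $\tilde{c}$ in place of $c$. This turns the random shift into a deterministic one for each fixed $\xi$, which is exactly the setting of the definition.

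Next I would invoke the monotone unimodality of $w$. By Definition~\ref{def:monotone}, for the symmetric convex set $\mathcal{R}$ and the fixed vector $\xi$, the map $c\mapsto\prob{w+c\xi\in\mathcal{R}}$ is non-increasing on $[0,\infty]$. Hence $\tilde{c}\geq c\geq 0$ yields the pointwise inequality $\prob{c\xi+w\in\mathcal{R}}\geq\prob{\tilde{c}\xi+w\in\mathcal{R}}$ for every realization $\xi\in\mathbb{R}^n$.

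Finally I would integrate this inequality over the distribution of $x$. By the law of total probability together with independence, $\prob{cx+w\in\mathcal{R}}=\expect{\prob{cx+w\in\mathcal{R}\mid x}}$, and the integrand dominates its counterpart with $\tilde{c}$ for every realization of $x$; monotonicity of the expectation then gives $\prob{cx+w\in\mathcal{R}}\geq\prob{\tilde{c}x+w\in\mathcal{R}}$, as claimed.

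The main obstacle is less computational than a matter of care: Definition~\ref{def:monotone} is stated for a deterministic shift, so the crux is to justify transferring the pointwise (in $\xi$) monotonicity through the expectation over the random $x$. This hinges on independence, which decouples the conditional law of $w$ from $x$, and on the elementary fact that a pointwise inequality between non-negative measurable functions is preserved under integration. Measurability of $\xi\mapsto\prob{c\xi+w\in\mathcal{R}}$ and the applicability of Fubini/Tonelli to write the joint probability as an iterated integral are routine but should be noted for completeness.
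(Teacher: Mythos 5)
Your proof is correct: conditioning on $x=\xi$, applying Definition~\ref{def:monotone} pointwise, and integrating over the law of $x$ via independence is exactly the standard argument for this lemma. Note that the paper itself states the lemma without proof, deferring to \cite[Lemma~1]{hewing2018stochastic}, and your reconstruction matches the conditioning-plus-integration proof given in that reference, so there is nothing to add beyond the routine measurability remarks you already flag.
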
 
\begin{definition}
\label{def:convex}
\cite[Def.~3.1]{dharmadhikari1976multivariate} 
A distribution $\mathcal{Q}_{\mathrm{w}}$ in $\mathbb{R}^n$ is called central convex unimodal if it is in the closed convex hull of the set of all uniform distributions on symmetric compact bodies in $\mathbb{R}^n$. 
\end{definition} 
\begin{proposition}
\label{prop:property}
\cite{dharmadhikari1976multivariate} 
Any central convex unimodal distribution is also monotone unimodal.
 Central convex unimodal distributions are closed under linear transformation and convolution. 
\end{proposition}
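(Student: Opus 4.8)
The plan is to prove both assertions by verifying them first on the generators of the class---the uniform distributions on symmetric compact convex bodies---and then lifting them to the closed convex hull, exploiting that all of the relevant properties are preserved under convex combinations and weak limits. By Definition~\ref{def:convex} a central convex unimodal distribution is precisely an element of this closed convex hull.

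For the first assertion (central convex unimodal $\Rightarrow$ monotone unimodal), I would show that the set $\mathcal{M}$ of monotone unimodal distributions is convex, weakly closed, and contains every uniform distribution on a symmetric convex body; the inclusion then follows. The base case is a direct application of Anderson's integral inequality: if $w$ is uniform on a symmetric convex body $\mathcal{K}$ with even quasi-concave density $f=\mathbf{1}_{\mathcal{K}}/\mathrm{vol}(\mathcal{K})$, then for symmetric convex $\mathcal{R}$ and fixed $x$ one has $\prob{w+cx\in\mathcal{R}}=\int_{\mathcal{R}}f(r-cx)\,dr$, which Anderson's theorem shows is non-increasing in $c\geq 0$. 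Convexity of $\mathcal{M}$ is immediate, since $\prob{w+cx\in\mathcal{R}}$ is affine in the distribution of $w$ and a convex combination of non-increasing functions of $c$ stays non-increasing.

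The hard part will be the weak closedness of $\mathcal{M}$. Along a weakly convergent sequence $w_k\Rightarrow w$ with $w_k\in\mathcal{M}$, and for fixed $c'\geq c\geq 0$, one must pass the inequality $\prob{w_k+cx\in\mathcal{R}}\geq\prob{w_k+c'x\in\mathcal{R}}$ to the limit. The obstacle is that $\mathcal{R}-cx$ and $\mathcal{R}-c'x$ are merely convex, so the limit measure may charge their boundaries and the set-probabilities need not be continuous. I would resolve this with the portmanteau theorem, bounding the ``$c'$'' term from above via the closed hull (upper semicontinuity on closed sets) and the ``$c$'' term from below via the interior (lower semicontinuity on open sets), and then closing the gap by inner/outer approximation of a convex set through closed convex subsets and open convex supersets whose boundary contributions vanish in the limit; this is the same type of limiting argument underlying Lemma~\ref{lemma:unimodal_monotone}.

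For the second assertion I would use that the pushforward $T_\ast$ under a linear map $T$ and the convolution are both weakly continuous and respect convex combinations (affine, resp.\ bilinear), so it again suffices to check the generators. An invertible $T$ sends the uniform distribution on $\mathcal{K}$ to the uniform distribution on the symmetric convex body $T\mathcal{K}$, while a singular or rectangular $T$ produces a degenerate uniform that is a weak limit of nondegenerate ones (thin symmetric slabs collapsing onto a lower-dimensional body) and hence still in the class; affineness and weak continuity then carry this to the whole closed convex hull. For convolution, the base case is that the convolution of two uniforms on symmetric convex bodies has a symmetric log-concave density---uniforms on convex bodies are log-concave and log-concavity is preserved under convolution by Pr\'ekopa's theorem---and any such density is central convex unimodal through the layer-cake representation $f=\int_0^\infty\mathbf{1}_{\{f\geq u\}}\,du$, whose superlevel sets $\{f\geq u\}$ are symmetric convex, exhibiting $f$ as a mixture of uniforms on symmetric convex bodies. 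Bilinearity and weak continuity of the convolution then extend closure from the generators to all central convex unimodal distributions.
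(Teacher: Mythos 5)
The paper itself offers no proof of Proposition~\ref{prop:property}: it is stated as a citation of \cite{dharmadhikari1976multivariate}, so your attempt must be judged as a reconstruction of the literature argument rather than against an in-paper proof. Your overall architecture is the standard and correct one for this class: verify each property on the generators (uniform distributions on symmetric compact convex bodies), then lift to the closed convex hull using convexity/bilinearity together with weak-limit arguments. Within that architecture, the Anderson base case, the convexity of the monotone-unimodal class, and the convolution step (Pr\'ekopa's theorem plus the layer-cake decomposition of a symmetric log-concave density into uniforms on its superlevel sets, which are symmetric and convex) are all sound, granting the standard fact that continuous mixtures of generators lie in the closed convex hull.

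Two steps are genuinely broken as written, though both are repairable. First, the pushforward of the uniform distribution on $\mathcal{K}$ under a singular or rectangular $T$ is \emph{not} a degenerate uniform: projecting the uniform distribution on the unit disk onto a coordinate axis gives the semicircle density, proportional to $\sqrt{1-t^2}$ on $[-1,1]$, not the uniform distribution on that segment; so the ``thin slabs collapsing'' picture, which only shows that degenerate \emph{uniforms} are central convex unimodal, addresses the wrong object. Repair it either by approximating $T$ with invertible maps $T_\epsilon$, whose pushforward of the uniform on $\mathcal{K}$ \emph{is} the uniform on the body $T_\epsilon\mathcal{K}$ and converges weakly to the pushforward under $T$; or by invoking Brunn--Minkowski to see that the image density, proportional to the slice volumes $\mathrm{vol}\bigl(\mathcal{K}\cap(v+\ker T)\bigr)$, is symmetric and quasi-concave, and then reusing your layer-cake argument. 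Second, your portmanteau pairing is backwards: writing $\mu_k\Rightarrow\mu$ for the weakly convergent sequence of monotone unimodal laws, the closed-set half of portmanteau gives $\limsup_k\mu_k(F)\leq\mu(F)$, which bounds the sequence by the limit and is therefore useless for bounding the limit law's probability at $c'$ from \emph{above}. The chain that works is: take an open convex symmetric superset $\mathcal{R}^\epsilon\supseteq\mathcal{R}$, use $\mu(\mathcal{R}-c'x)\leq\liminf_k\mu_k(\mathcal{R}^\epsilon-c'x)$ (open-set portmanteau), then the approximants' monotone unimodality on $\mathcal{R}^\epsilon$, then the closed-set inequality on $\mathrm{cl}(\mathcal{R}^\epsilon)$ for the $c$-side, and finally let $\epsilon\downarrow 0$ using $\bigcap_{\epsilon>0}\mathrm{cl}(\mathcal{R}^\epsilon)=\mathrm{cl}(\mathcal{R})$. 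This delivers monotone unimodality for \emph{closed} symmetric convex sets --- exactly what the paper uses, since the PRS $\mathcal{R}_k$ are closed --- but note that Definition~\ref{def:monotone} nominally quantifies over all symmetric convex sets, so matching it verbatim requires one further approximation step that your sketch does not supply.
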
  
 \begin{assumption}
\label{ass:disturbances_2} 
The distribution $\mathcal{Q}_{\mathrm{w}}$ is centrally convex unimodal. 
\end{assumption}
This condition includes many standard distributions, such as multivariate Gaussians, uniform distributions (over convex symmetric sets), or the Laplace distribution.  
Based on this restriction (Ass.~\ref{ass:disturbances_2}), Proposition~\ref{prop:property} ensures that the predicted error (i.e., a linear combination of the disturbances) is centrally convex unimodal and also monotone unimodal, which implies that Lemma~\ref{lemma:unimodal_monotone} can be used.  
Similar arguments were used in~\cite[Thm.~3]{hewing2018stochastic} to show closed-loop properties under a case distinction and in the next section we extend these arguments to the proposed interpolating initial state constraint.

%!TEX root = ./SMPC_reach.tex
%%%%%%%%%%%%%%%%%%%%%%%%%%%%%%%%%%%%%%%%%%%%%%%%%%%%%%%%%%%%%%%%%%%%%%%%%%%%%%% 
\section{Closed-loop analysis}
\label{sec:theory} 
In the following, we show that the proposed SMPC formulation provides all the desired closed-loop properties under the same conditions used in~\cite{hewing2018stochastic}. 
First,  we show recursive feasibility (Sec.~\ref{sec:theory_rec}).  
Then, as the main technical contribution, we investigate the error dynamics under the interpolating initial state constraint~\eqref{eq:MPC_interpolation} and show a suitable nestedness property (Sec.~\ref{sec:theory_error}). 
This property then allows us to prove closed-loop satisfaction of the chance constraints by using suitably tightened constraints $\overline{\mathcal{Z}}_{k}$ (Sec.~\ref{sec:theory_constraints}). 
Finally, we derive closed-loop performance bounds (Sec.~\ref{sec:theory_performance}).
%!TEX root = ./SMPC_reach.tex
%%%%%%%%%%%%%%%%%%%%%%%%%%%%%%%%%%%%%%%%%%%%%%%%%%%%%%%%%%%%%%%%%%%%%%%%%%%%%%% 
\subsection{Recursive feasibility}
\label{sec:theory_rec} 
Analogously to existing results in SMPC (cf.~\cite{farina2013probabilistic,hewing2018stochastic,hewing2020recursively}), recursive feasibility can be established by choosing a suitable (nominal) terminal set constraint $\mathcal{X}_{\mathrm{f}}$. 
\begin{assumption}
\label{ass:terminal_set} 
There exists a matrix $K_{\mathrm{f}}\in\mathbb{R}^{m\times n}$, such that for all $z\in\mathcal{X}_{\mathrm{f}}$, $(z,K_{\mathrm{f}}z)\in\overline{\mathcal{Z}}_k$, $k\in\mathbb{I}_{\geq 0}$ and $(A+BK_{\mathrm{f}})z\in\mathcal{X}_{\mathrm{f}}$. 
\end{assumption}
\begin{proposition}
\label{prop:recursive_feasible}
Let Assumption~\ref{ass:terminal_set} hold and suppose Problem~\eqref{eq:MPC} is feasible at $k=0$.
Then, Problem~\eqref{eq:MPC} is feasible for all $k\in\mathbb{I}_{\geq 0}$  for the resulting closed-loop system~\eqref{eq:system}, \eqref{eq:u}.
\end{proposition}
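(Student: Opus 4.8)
The plan is to proceed by induction on $k$: assuming Problem~\eqref{eq:MPC} is feasible at some time $k$ with minimizer $(z^\star_{\cdot|k},v^\star_{\cdot|k},\lambda^\star_k)$, I would explicitly construct a feasible candidate solution at time $k+1$ for the realized closed-loop state $x(k+1)$. The decisive observation is that the interpolation constraint~\eqref{eq:MPC_interpolation} permits the choice $\lambda_{k+1}=0$, which sets $z_{0|k+1}=z^\star_{1|k}$, i.e., the nominal state predicted one step earlier. Because $z^\star_{1|k}$ is fully determined at time $k$ and does not involve $x(k+1)$, the entire candidate trajectory becomes independent of the disturbance realization $w(k)$, which is exactly what is needed to tolerate unbounded disturbances.

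Concretely, I would define the candidate by shifting the previous optimal trajectory and appending the terminal controller $K_{\mathrm{f}}$ from Assumption~\ref{ass:terminal_set}: set $\lambda_{k+1}=0$, take $z_{i|k+1}=z^\star_{i+1|k}$ for $i\in\mathbb{I}_{[0,N-1]}$ and $v_{i|k+1}=v^\star_{i+1|k}$ for $i\in\mathbb{I}_{[0,N-2]}$, and close the loop with $v_{N-1|k+1}=K_{\mathrm{f}}z^\star_{N|k}$ and $z_{N|k+1}=(A+BK_{\mathrm{f}})z^\star_{N|k}$. With $\lambda_{k+1}=0$, the initial state constraint~\eqref{eq:MPC_interpolation} is satisfied by construction, and the nominal dynamics~\eqref{eq:MPC_nominal_dynamics} hold because the shifted portion inherits the dynamics from time $k$ and the appended step satisfies $z_{N|k+1}=Az^\star_{N|k}+BK_{\mathrm{f}}z^\star_{N|k}$.

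It remains to verify the tightened and terminal constraints. For the shifted indices $i\in\mathbb{I}_{[0,N-2]}$, feasibility follows directly since $(z_{i|k+1},v_{i|k+1})=(z^\star_{i+1|k},v^\star_{i+1|k})\in\overline{\mathcal{Z}}_{k+i+1}=\overline{\mathcal{Z}}_{(k+1)+i}$, where the time index of the tightened constraint matches after shifting. For the terminal index $i=N-1$, I would invoke Assumption~\ref{ass:terminal_set}: since $z^\star_{N|k}\in\mathcal{X}_{\mathrm{f}}$, we have $(z^\star_{N|k},K_{\mathrm{f}}z^\star_{N|k})\in\overline{\mathcal{Z}}_{(k+1)+N-1}$ and $(A+BK_{\mathrm{f}})z^\star_{N|k}\in\mathcal{X}_{\mathrm{f}}$, which gives both the final tightened constraint and the terminal set constraint $z_{N|k+1}\in\mathcal{X}_{\mathrm{f}}$.

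I do not expect a genuine obstacle here, since the argument is the standard shift-and-terminal-controller construction; the only point requiring care is that the tightened sets $\overline{\mathcal{Z}}_k$ are time-varying, so one must track the index shift correctly to ensure $(z^\star_{i+1|k},v^\star_{i+1|k})$ lands in $\overline{\mathcal{Z}}_{(k+1)+i}$. The conceptually important feature is that admitting $\lambda_{k+1}=0$ makes the candidate deterministic and thus recoverable for \emph{every} disturbance realization, establishing robust recursive feasibility; the substantive stochastic analysis (the nestedness of the error sets and the resulting chance constraint satisfaction) is handled separately in the subsequent sections and is not needed for this proposition.
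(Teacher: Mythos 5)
Your proof is correct and follows essentially the same route as the paper: the shifted candidate with $\lambda_{k+1}=0$, the appended terminal controller $K_{\mathrm{f}}$ from Assumption~\ref{ass:terminal_set}, and the observation that the candidate is independent of the disturbance realization. The paper's own proof is terser (it delegates the constraint-index bookkeeping to ``standard arguments'' via \cite[Thm.~1]{hewing2020recursively}), whereas you verify the shift $\overline{\mathcal{Z}}_{k+i+1}=\overline{\mathcal{Z}}_{(k+1)+i}$ explicitly, which is exactly the detail being referenced there.
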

\begin{proof}
Given a feasible solution to Problem~\eqref{eq:MPC} at time $k\in\mathbb{I}_{\geq 0}$, a feasible candidate solution at time $k+1$ is given by  $\lambda_{k+1}=0$,
$v_{i|k+1}=v_{i+1|k}^\star$, $i\in\mathbb{I}_{[0,N-2]}$, $v_{N-1|k+1}=K_{\mathrm{f}}z_{N-1|k+1}$ using  Assumption~\ref{ass:terminal_set}. 
This candidate solution is independent of the disturbance realization $w$ and recursive feasibility follows with standard arguments~\cite[Thm.~1]{hewing2020recursively}. 
\end{proof} 
%!TEX root = ./SMPC_reach.tex
%%%%%%%%%%%%%%%%%%%%%%%%%%%%%%%%%%%%%%%%%%%%%%%%%%%%%%%%%%%%%%%%%%%%%%%%%%%%%%% 
\subsection{Predicted and closed-loop error}
\label{sec:theory_error}
In order to prove closed-loop satisfaction of the chance constraints~\eqref{eq:chance_constraints}, we require suitable properties for the closed-loop error. 
To this end, we show that for any convex symmetric set, the containment probability of the \textit{closed-loop} error is not lower than the corresponding probability of the initially predicted error (Prop.~\ref{prop:error_closed_loop}). 
The error between the measured state and the nominal (online optimized) state is given by
\begin{align}
\label{eq:error}
e(k):=&x(k)-z^\star_{0|k}\stackrel{\eqref{eq:MPC_interpolation}}{=}(1-\lambda_k^\star)(x(k)-z_{1|k-1}^\star).
\end{align}
We define the (uncertain) prediction for the state $x$  resulting from application of the stabilizing feedback~\eqref{eq:u} as
\begin{align}
\label{eq:state_prediction}
x_{i+1|k}
=&Ax_{i|k}+B(v_{i|k}^\star+K(x_{i|k}-z^\star_{i|k}))+w(k+i),
\end{align}
with $x_{0|k}=x(k)$ and $i\in\mathbb{I}_{[0,N-1]}$, $k\in\mathbb{I}_{\geq 0}$. 
Correspondingly, we also define the (uncertain) predicted error $e_{i|k}:=x_{i|k}-z_{i|k}^\star$, 
which satisfies the following recursion
\begin{align}
\label{eq:error_prediction_recursion}
e_{i+1|k}=&A_Ke_{i|k}+w(i+k), \quad e_{0|k}=e(k).
\end{align}
\begin{proposition}
\label{prop:error_closed_loop}
Let Assumption~\ref{ass:disturbances_2} hold and consider any convex symmetric set $\mathcal{R}\subseteq\mathbb{R}^n$. 
For system~\eqref{eq:system} under the control law~\eqref{eq:u} resulting from~\eqref{eq:MPC}, the error~\eqref{eq:error}--\eqref{eq:error_prediction_recursion}
satisfies
\begin{align}
\label{eq:set_inclusion}
\prob{e(k)\in\mathcal{R}}\geq \prob{e_{k|0}\in\mathcal{R}},\quad k\in\mathbb{I}_{\geq 0}.
\end{align}
\end{proposition}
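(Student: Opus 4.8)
The plan is to convert the interpolating initial-state constraint \eqref{eq:MPC_interpolation} into an explicit recursion for the closed-loop error and then compare this error with the open-loop predicted error $e_{k|0}$ disturbance-by-disturbance, using Lemma~\ref{lemma:unimodal_monotone} as the only probabilistic ingredient.

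First I would derive the closed-loop error dynamics. Combining the definition \eqref{eq:error} with \eqref{eq:system}, \eqref{eq:u} and the nominal update \eqref{eq:MPC_nominal_dynamics} (for $i=0$) gives $x(k)-z_{1|k-1}^\star=A_K e(k-1)+w(k-1)$, so that, writing $\mu_k:=1-\lambda_k^\star\in[0,1]$, the error \eqref{eq:error} satisfies $e(k)=\mu_k\left(A_K e(k-1)+w(k-1)\right)$, with $e(0)=0$ since $z_{1|-1}^\star=x_0=x(0)$. Unrolling this recursion together with the open-loop recursion \eqref{eq:error_prediction_recursion} started at $k=0$ (where $e_{0|0}=e(0)=0$) yields
\[
e(k)=\sum_{j=0}^{k-1}c_j\,A_K^{k-1-j}w(j),\qquad e_{k|0}=\sum_{j=0}^{k-1}A_K^{k-1-j}w(j),
\]
with $c_j:=\prod_{l=j+1}^{k}\mu_l\in[0,1]$. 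Hence $e(k)$ is precisely $e_{k|0}$ with each disturbance block $A_K^{k-1-j}w(j)$ contracted towards the origin by a factor in $[0,1]$.

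The main obstacle is that the contraction factors $c_j$ are random and, because each $\mu_l$ is a function of $x(l)$ and therefore of $w(0),\dots,w(l-1)$, they are statistically coupled to the disturbances $w(j)$ they multiply. Lemma~\ref{lemma:unimodal_monotone} requires a \emph{deterministic} scaling of a vector that is \emph{independent} of the additive noise, so it cannot be applied to the two sums above directly; isolating a single deterministic scaling per step is the crux of the argument.

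To this end I would introduce, for $m\in\mathbb{I}_{[0,k]}$, the hybrid variables $\zeta_m:=\sum_{j=0}^{k-1}\left(\prod_{l=j+1}^{m}\mu_l\right)A_K^{k-1-j}w(j)$ (with empty products equal to $1$), which switch the factors on one index at a time, so that $\zeta_0=e_{k|0}$ and $\zeta_k=e(k)$. Splitting at $j=m-1$ writes $\zeta_{m-1}=S+T$ and $\zeta_m=\mu_m S+T$, where $S:=\sum_{j=0}^{m-1}\left(\prod_{l=j+1}^{m-1}\mu_l\right)A_K^{k-1-j}w(j)$ and $T:=\sum_{j=m}^{k-1}A_K^{k-1-j}w(j)$. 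The key point is that $S$ and $\mu_m$ depend only on $w(0),\dots,w(m-1)$, whereas $T$ is a linear combination of the independent disturbances $w(m),\dots,w(k-1)$. Conditioning on $w(0),\dots,w(m-1)$ thus freezes $S$ and $\mu_m$ to constants, while $T$ retains its marginal law, which is centrally convex unimodal and hence monotone unimodal by Assumption~\ref{ass:disturbances_2} and Proposition~\ref{prop:property}, and is independent of $S,\mu_m$. Lemma~\ref{lemma:unimodal_monotone} with $c=\mu_m\le\tilde c=1$ then gives $\prob{\zeta_m\in\mathcal{R}\mid w(0),\dots,w(m-1)}\ge\prob{\zeta_{m-1}\in\mathcal{R}\mid w(0),\dots,w(m-1)}$ (the boundary case $m=k$, where $T=0$, following instead directly from convexity and symmetry of $\mathcal{R}$). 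Taking expectations and chaining over $m=1,\dots,k$ produces $\prob{e(k)\in\mathcal{R}}=\prob{\zeta_k\in\mathcal{R}}\ge\cdots\ge\prob{\zeta_0\in\mathcal{R}}=\prob{e_{k|0}\in\mathcal{R}}$, which is \eqref{eq:set_inclusion}.
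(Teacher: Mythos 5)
Your proof is correct and is essentially the paper's own argument in different notation: your hybrid variables $\zeta_m$ coincide exactly with the paper's intermediate predicted errors $e_{k-m|m}$, and your $S$/$T$ split with Lemma~\ref{lemma:unimodal_monotone} applied conditionally on past disturbances is precisely the paper's decomposition into $\hat{e}_{i|k-i}$ and $\tilde{e}_{i|k-i}$. The only (minor) differences are that you unroll the recursions into explicit disturbance sums, make the conditioning that justifies using a random scaling $\mu_m$ explicit, and treat the degenerate step $T=0$ separately via convexity and symmetry of $\mathcal{R}$ --- refinements the paper leaves implicit.
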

\begin{proof}
The proof is an extension of~\cite[Thm.~3]{hewing2018stochastic}.  
At time $k-i$, the error is given by
\begin{align}
\label{eq:error_step}
&e(k-i)\stackrel{\eqref{eq:error}}{=}(1-\lambda_{k-i}^\star)(x(k-i)-z_{1|k-i-1}^\star). 
\end{align}
The predicted error is given by 
\begin{align}
\label{eq:error_prediction_lambda}
&e_{i|k-i}\stackrel{\eqref{eq:error_prediction_recursion}}{=}A_K^ie(k-i)+\underbrace{\sum_{j=0}^{i-1}A_K^{i-j-1}w(k-i+j)}_{=:\tilde{e}_{i|k-i}}\\
\label{eq:error_prediction_lambda_2}
\stackrel{\eqref{eq:error_step}}{=}&(1-\lambda_{k-i}^\star)\underbrace{A_K^i(x(k-i)-z_{1|k-i-1}^\star)}_{=:\hat{e}_{i|k-i}}+\tilde{e}_{i|k-i}.
\end{align}
Note that the term $\tilde{e}_{i|k-i}$ only depends on future disturbances (given time $k-i$), while the term $\hat{e}_{i|k-i}$ depends on terms known at time $k-i$. 
Given that $w$ is i.i.d., this implies that $\tilde{e}_{i|k-i}$ and $\hat{e}_{i|k-i}$ are independent.  
Furthermore, using  Assumption~\ref{ass:disturbances_2} and Proposition~\ref{prop:property}, $\tilde{e}_{i|k-i}$ is central convex unimodal and monotone unimodal (Def.~\ref{def:monotone}--\ref{def:convex}). 
Correspondingly, we can invoke Lemma~\ref{lemma:unimodal_monotone} with $c=1-\lambda_{k-i}^\star\in[0,1]$ and $\tilde{c}=1$:
\begin{align}
\label{eq:induction claim}
&\prob{e_{i|k-i}\in\mathcal{R}}\nonumber\\
\stackrel{\eqref{eq:error_prediction_lambda_2}}{=}&\prob{(1-\lambda_{k-i}^\star)\hat{e}_{i|k-i}+\tilde{e}_{i|k-i}\in\mathcal{R}}\nonumber\\
\stackrel{\mathrm{Lemma~\ref{lemma:unimodal_monotone}}}{\geq}& \prob{\hat{e}_{i|k-i}+\tilde{e}_{i|k-i}\in\mathcal{R}}\nonumber\\
=& \prob{e_{i+1|k-i-1}\in\mathcal{R}},
\end{align}
where the last equality holds with $\hat{e}_{i|k-i}+\tilde{e}_{i|k-i}=e_{i+1|k-i-1}$. 
Utilizing~\eqref{eq:induction claim} recursively yields~\eqref{eq:set_inclusion}. 
\end{proof}
This result ensures that probabilistic guarantees derived from the initially predicted error remain valid for the true closed-loop error if: the corresponding set $\mathcal{R}$ is convex symmetric  and the distribution is centrally convex unimodal. 
This result is a stochastic counterpart to the nestedness properties usually invoked in robust MPC (cf., e.g.~\cite[Fig.~7.3]{kouvaritakis2016model}). 
 
%!TEX root = ./SMPC_reach.tex
%%%%%%%%%%%%%%%%%%%%%%%%%%%%%%%%%%%%%%%%%%%%%%%%%%%%%%%%%%%%%%%%%%%%%%%%%%%%%%% 
\subsection{Tightened constraints and chance constraint satisfaction}
\label{sec:theory_constraints} 
A set $\mathcal{R}$ satisfying $\prob{e_{k|0}\in\mathcal{R}}\geq p$ is called a $k$-step \textit{probabilistic reachable set} (PRS)~\cite[Def.~4]{hewing2018stochastic}, \cite[Def.~2]{hewing2020recursively}, where the predicted error $e_{k|0}$ satisfies the linear system~\eqref{eq:error_prediction_recursion} with $e_{0|0}=0$. 
A simple PRS is given by $\mathcal{R}_k=\{x|x^\top \Sigma_{\mathrm{x},k}^{-1}x\leq \tilde{p}\}$, 
\begin{align}
\label{eq:variance_prediction}
\Sigma_{\mathrm{x},k+1}=A_K \Sigma_{\mathrm{x},k} A_K^\top+\Sigma_{\mathrm{w}},~k\in\mathbb{I}_{\geq 0},\quad \Sigma_{\mathrm{x},0}=0,
\end{align}
with $\tilde{p}$ computed using the Chebyshev bounds or directly a Gaussian distribution, compare~\cite[Lemma 3, Rk.~3]{hewing2018stochastic}, \cite[Sec.~3.2]{hewing2020recursively}.  
In case only a subspace $y_{\mathrm{e}}=(C+DK)e\in\mathbb{R}^{n_{\mathrm{y}}}$, $n_{\mathrm{y}}<n$ of the error $e$ is relevant for the constraints $\mathcal{Z}$~\eqref{eq:chance_constraints},  the PRS should be be constructed by directly using the distribution of the lower dimensional output variable $y_{\mathrm{e}}$~\cite[Equ.~(18)]{li2021chance}, \cite[Sec.~3.3]{hewing2020recursively}. 
In case of non-Gaussian disturbances, the Chebyshev bounds can be very conservative, which can be mitigated by using scenario-based approaches~\cite{lorenzen2016constraint,hewing2019scenario} or polynomial
chaos expansion~\cite[Sec.~3.A]{muhlpfordt2017comments} to compute the PRS.  
The presented theoretical results are applicable to any RPS, assuming the sets are chosen to be symmetric and convex. 
\begin{assumption}
\label{ass:PRS}
The sets $\mathcal{R}_k$ are chosen to be symmetric, convex, and satisfy $\prob{e_{k|0}\in\mathcal{R}_k}\geq p$, $k\in\mathbb{I}_{\geq 0}\cup\{\infty\}$. 
\end{assumption}
The restriction to \textit{symmetric} PRS (Ass.~\ref{ass:PRS}) can introduce conservatism but is essential for the presented theoretical results,  compare the discussion in Section~\ref{sec:discussion} and the example in~Appendix~\ref{app:example_symmetric} for details. 
Given the PRS, we compute the tightened constraints using
\begin{align}
\label{eq:tightend_constraints}
\overline{\mathcal{Z}}_k:=\mathcal{Z}\ominus(\mathcal{R}_k\times K\mathcal{R}_k),\quad k\in\mathbb{I}_{\geq 0}.
\end{align}
Closed-loop chance constraint satisfaction holds by applying the main technical result (Prop.~\ref{prop:error_closed_loop}) and using the constraint tightening~\eqref{eq:tightend_constraints} with Assumption~\ref{ass:PRS}, analogously to~\cite[Cor.~1]{hewing2018stochastic}. 
\begin{proposition}
\label{prop:chance_constraint}
Let Assumptions~\ref{ass:disturbances_2}--\ref{ass:PRS} hold.  
Suppose that Problem~\eqref{eq:MPC} is feasible at $k=0$.
Then, the resulting closed-loop system~\eqref{eq:system}, \eqref{eq:u} satisfies the chance constraints~\eqref{eq:chance_constraints}. 
\end{proposition}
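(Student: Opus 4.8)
The plan is to show that the event $\{e(k)\in\mathcal{R}_k\}$ is contained in the event $\{(x(k),u(k))\in\mathcal{Z}\}$, and then to lower-bound the probability of the former by chaining the main technical result (Proposition~\ref{prop:error_closed_loop}) with the probabilistic reachable set (PRS) property (Assumption~\ref{ass:PRS}).

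First I would invoke recursive feasibility. The hypotheses of Assumptions~\ref{ass:disturbances_2}--\ref{ass:PRS} include the terminal set condition (Assumption~\ref{ass:terminal_set}), so Proposition~\ref{prop:recursive_feasible} guarantees that Problem~\eqref{eq:MPC} is feasible for all $k\in\mathbb{I}_{\geq 0}$. In particular, the tightened constraint~\eqref{eq:MPC_nominal_tightened} holds for $i=0$, i.e., $(z_{0|k}^\star,v_{0|k}^\star)\in\overline{\mathcal{Z}}_k=\mathcal{Z}\ominus(\mathcal{R}_k\times K\mathcal{R}_k)$ by the tightening~\eqref{eq:tightend_constraints}.

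Next I would rewrite the closed-loop state-input pair in terms of the error $e(k)=x(k)-z_{0|k}^\star$. From the control law~\eqref{eq:u} one has $u(k)=v_{0|k}^\star+Ke(k)$, so that
\begin{align*}
(x(k),u(k))=(z_{0|k}^\star,v_{0|k}^\star)+(e(k),Ke(k)).
\end{align*}
By the definition of the Pontryagin difference, the tightened constraint implies that whenever $e(k)\in\mathcal{R}_k$ (hence $(e(k),Ke(k))\in\mathcal{R}_k\times K\mathcal{R}_k$), the true pair satisfies $(x(k),u(k))\in\mathcal{Z}$. This yields the inclusion of events $\{e(k)\in\mathcal{R}_k\}\subseteq\{(x(k),u(k))\in\mathcal{Z}\}$ and therefore $\prob{(x(k),u(k))\in\mathcal{Z}}\geq\prob{e(k)\in\mathcal{R}_k}$.

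Finally I would close the argument with the two key ingredients. Since each $\mathcal{R}_k$ is convex and symmetric (Assumption~\ref{ass:PRS}), Proposition~\ref{prop:error_closed_loop} applies with $\mathcal{R}=\mathcal{R}_k$, giving $\prob{e(k)\in\mathcal{R}_k}\geq\prob{e_{k|0}\in\mathcal{R}_k}$, and the PRS property in Assumption~\ref{ass:PRS} gives $\prob{e_{k|0}\in\mathcal{R}_k}\geq p$. Chaining these inequalities establishes~\eqref{eq:chance_constraints}. The substantive difficulty has already been absorbed into Proposition~\ref{prop:error_closed_loop}, namely that the \emph{closed-loop} error, whose distribution mixes the optimized interpolation weights $\lambda^\star$ with the disturbances, cannot have smaller containment probability than the simple open-loop predicted error $e_{k|0}$. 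The remaining work here is the bookkeeping that links the tightened nominal constraint to the true constraint through the Pontryagin difference; the only point requiring care is to ensure that the symmetric convex PRS $\mathcal{R}_k$ used in the tightening~\eqref{eq:tightend_constraints} is exactly the set fed into Proposition~\ref{prop:error_closed_loop}.
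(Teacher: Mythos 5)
Your proof is correct and follows essentially the same route as the paper's: the Pontryagin-difference bookkeeping gives the event inclusion $\{e(k)\in\mathcal{R}_k\}\subseteq\{(x(k),u(k))\in\mathcal{Z}\}$, and chaining Proposition~\ref{prop:error_closed_loop} with Assumption~\ref{ass:PRS} yields the probability bound. Your explicit invocation of Proposition~\ref{prop:recursive_feasible} to justify that $(z_{0|k}^\star,v_{0|k}^\star)\in\overline{\mathcal{Z}}_k$ holds for all $k$ is a detail the paper leaves implicit, and is a welcome clarification rather than a departure.
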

\begin{proof}
In case $e(k)\in\mathcal{R}_k$,  the constraint tightening implies
\begin{align*}
&(x(k),u(k))\stackrel{\eqref{eq:u},\eqref{eq:error}}{=}(z_{0|k}^\star,v_{0|k}^\star)+(e(k),Ke(k))\\ 
\stackrel{\eqref{eq:MPC_nominal_tightened}}{\in}&\overline{\mathcal{Z}}_k\oplus (\mathcal{R}_k\times K\mathcal{R}_k)\stackrel{\eqref{eq:tightend_constraints}}{\subseteq}\mathcal{Z},\quad k\in\mathbb{I}_{\geq 0}.
\end{align*}
Thus, chance constraint satisfaction~\eqref{eq:chance_constraints} follows from
\begin{align*}
&\prob{e(k)\in\mathcal{R}_k}
\stackrel{\eqref{eq:set_inclusion}}{\geq} \prob{e_{k|0}\in\mathcal{R}_k}\stackrel{\mathrm{Ass.}~\ref{ass:PRS}}{\geq} p,\quad k\in\mathbb{I}_{\geq 0}.&\qedhere
\end{align*}
\end{proof}
The (joint) chance constraint~\eqref{eq:chance_constraints} is often replaced by a set of individual chance constraints using, e.g. Boole’s inequality and risk allocation~\cite{paulson2020stochastic}. 
In this case, one should construct PRS (Ass.~\ref{ass:PRS}) for each individual chance constraint and tighten the constraints respectively (cf.~\cite[Equ.~5]{hewing2020recursively}). 

%!TEX root = ./SMPC_reach.tex
%%%%%%%%%%%%%%%%%%%%%%%%%%%%%%%%%%%%%%%%%%%%%%%%%%%%%%%%%%%%%%%%%%%%%%%%%%%%%%% 
\subsection{Performance analysis}
\label{sec:theory_performance}
We show that the closed-loop performance is on average no worse than applying the state feedback $u=Kx$, where $K$ is the feedback in~\eqref{eq:u}, which significantly improves the bound for the case-distinction SMPC~\cite{hewing2018stochastic} (cf. discussion in Section~\ref{sec:discussion}). 
\begin{assumption}
\label{ass:terminal_cost} 
Assumption~\ref{ass:terminal_set} holds with $K_{\mathrm{f}}=K$. 
The terminal cost $V_{\mathrm{f}}(x)=x^\top P_{\mathrm{f}}x+p_{\mathrm{f}}^\top x$ is chosen such that $V_{\mathrm{f}}((A+BK)x)= V_{\mathrm{f}}(x)-\ell(x,Kx)$ for all $x\in\mathbb{R}^n$. 
\end{assumption}
The restriction $K_{\mathrm{f}}=K$ is also needed in comparable performance results in SMPC schemes~\cite{lorenzen2016constraint,hewing2020recursively}. As in~\cite{hewing2020recursively}, the cost function in the SMPC~\eqref{eq:MPC} is chosen as the finite-horizon expected cost conditioned on the measured state $x(k)$:  
\begin{align*} 
\mathcal{J}_N(x(k),z_{\cdot|k},v_{\cdot|k},\lambda_k):=\expect{\sum_{i=0}^{N-1}\ell(x_{i|k},u_{i|k})+V_{\mathrm{f}}(x_{N|k})},
\end{align*}
where the predicted state and input are given by $x_{0|k}=x(k)$, \eqref{eq:state_prediction} and $u_{i|k}=v_{i|k}+K(x_{i|k}-z_{i|k})$. 
The corresponding mean is given by $\overline{x}_{i+1|k}=A\overline{x}_{i|k}+B\overline{u}_{i|k}$, $\overline{u}_{i|k}=v_{i|k}+K(\overline{x}_{i|k}-z_{i|k})$, $\overline{x}_{0|k}=x(k)$, which is \textit{not} equivalent to the nominal trajectory $z_{\cdot|k}$, $v_{\cdot|k}$ due to the interpolating initial state constraint~\eqref{eq:MPC_interpolation} in case $\lambda_k\neq 1$. 
The expected cost is equivalent to
\begin{align}
\label{eq:J_expected_2}
&\mathcal{J}_N(x(k),z_{\cdot|k},v_{\cdot|k},\lambda_k)
=\sum_{i=0}^{N-1}\ell(\overline{x}_{i|k},\overline{u}_{i|k})+V_{\mathrm{f}}(\overline{x}_{N|k})\\
&+\sum_{i=0}^{N-1}\tr{(Q+K^\top RK)\Sigma_{\mathrm{x},i}}+\tr{P_{\mathrm{f}}\Sigma_{\mathrm{x},N}},\nonumber
\end{align}
with the state variance  $\Sigma_{\mathrm{x},i}$ according to~\eqref{eq:variance_prediction}. 
\begin{proposition}
\label{prop:performance}
Let Assumptions~ \ref{ass:terminal_set} and \ref{ass:terminal_cost} hold. 
Suppose that Problem~\eqref{eq:MPC} is feasible at $k=0$. 
Then, for all $k\in\mathbb{I}_{\geq 0}$:
\begin{align}
\label{eq:value_expected_next}
&\expect{\mathcal{J}_N^\star(x(k+1),z_{1|k}^\star)-\mathcal{J}_N^\star(x(k),z_{1|k-1}^\star)}\\
\leq&-\ell(x(k),u(k))+ \tr{P_{\mathrm{f}}\Sigma_{\mathrm{w}}}.\nonumber
\end{align}
Furthermore, if the stage cost $\ell(x,u)$ is lower bounded, then 
\begin{align}
\label{eq:performance_bound}
\limsup_{K\rightarrow \infty}\dfrac{1}{K}\sum_{k=0}^{K-1}\expect{\ell(x(k),u(k))}\leq \tr{P_{\mathrm{f}}\Sigma_{\mathrm{w}}}.
\end{align}
\end{proposition}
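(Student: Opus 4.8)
The plan is to establish the one-step inequality \eqref{eq:value_expected_next} (read as a conditional expectation given the information available at time $k$, so that $\ell(x(k),u(k))$ and $\mathcal{J}_N^\star(x(k),z_{1|k-1}^\star)$ are known constants) by the standard value-function-descent argument, and then to obtain \eqref{eq:performance_bound} by telescoping. The crucial structural fact is that the candidate at time $k+1$ uses $\lambda_{k+1}=0$, so its nominal initialization is exactly $z_{0|k+1}=z_{1|k}^\star$; this decouples the candidate from the interpolation and makes the time shift clean. I would take the candidate from the proof of Proposition~\ref{prop:recursive_feasible} ($\lambda_{k+1}=0$, $v_{i|k+1}=v_{i+1|k}^\star$, $v_{N-1|k+1}=Kz_{N-1|k+1}$, using $K_{\mathrm{f}}=K$ from Assumption~\ref{ass:terminal_cost}) and compute its \emph{mean} trajectory. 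Combining the closed-loop relation $x(k+1)=Ax(k)+Bu(k)+w(k)$ with \eqref{eq:u} and the nominal dynamics shows $Ax(k)+Bu(k)=\overline{x}_{1|k}^\star$, so the candidate's initial mean error is $\overline{e}_{0|k+1}=x(k+1)-z_{1|k}^\star=A_Ke(k)+w(k)$. Propagating the deterministic recursion $\overline{e}_{i+1|k+1}=A_K\overline{e}_{i|k+1}$ then yields the key identities $\overline{x}_{i|k+1}=\overline{x}_{i+1|k}^\star+A_K^iw(k)$ and $\overline{u}_{i|k+1}=\overline{u}_{i+1|k}^\star+KA_K^iw(k)$ for $i\in\mathbb{I}_{[0,N-2]}$, with the terminal stage governed by $\overline{u}_{N-1|k+1}=K\overline{x}_{N-1|k+1}$ and $\overline{x}_{N|k+1}=A_K\overline{x}_{N|k}^\star+A_K^Nw(k)$.

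Next I would insert this candidate into the closed-form cost \eqref{eq:J_expected_2} and take the conditional expectation over $w(k)$. Since $\ell$ and $V_{\mathrm{f}}$ are quadratic and $w(k)$ is zero-mean, every term linear in $w(k)$ drops out and each stage contributes a variance correction $\tr{(Q+K^\top RK)A_K^i\Sigma_{\mathrm{w}}(A_K^i)^\top}$ (and $\tr{P_{\mathrm{f}}A_K^N\Sigma_{\mathrm{w}}(A_K^N)^\top}$ for the terminal cost). Applying the terminal equality of Assumption~\ref{ass:terminal_cost} at the state $\overline{x}_{N|k}^\star$ cancels the $i=N-1$ stage cost against the decrease of $V_{\mathrm{f}}$, while the index shift $\sum_{i=0}^{N-2}\ell(\overline{x}_{i+1|k}^\star,\overline{u}_{i+1|k}^\star)=\sum_{j=1}^{N-1}\ell(\overline{x}_{j|k}^\star,\overline{u}_{j|k}^\star)$ reconstructs $\mathcal{J}_N^\star(x(k),z_{1|k-1}^\star)$ minus the first-stage cost $\ell(\overline{x}_{0|k}^\star,\overline{u}_{0|k}^\star)=\ell(x(k),u(k))$. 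The deterministic variance constant $\sum_{i}\tr{(Q+K^\top RK)\Sigma_{\mathrm{x},i}}+\tr{P_{\mathrm{f}}\Sigma_{\mathrm{x},N}}$ in \eqref{eq:J_expected_2} is identical at $k$ and $k+1$ and cancels, leaving the residual $\tr{(Q+K^\top RK)\Sigma_{\mathrm{x},N}}+\tr{P_{\mathrm{f}}(\Sigma_{\mathrm{x},N+1}-\Sigma_{\mathrm{x},N})}$. I would then collapse this to exactly $\tr{P_{\mathrm{f}}\Sigma_{\mathrm{w}}}$ using the quadratic part of Assumption~\ref{ass:terminal_cost}, namely $A_K^\top P_{\mathrm{f}}A_K=P_{\mathrm{f}}-(Q+K^\top RK)$, together with the variance recursion \eqref{eq:variance_prediction}. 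Invoking optimality of $\mathcal{J}_N^\star(x(k+1),z_{1|k}^\star)$ against this feasible candidate and then the tower property gives \eqref{eq:value_expected_next}.

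For \eqref{eq:performance_bound} I would take total expectations in \eqref{eq:value_expected_next}, rearrange to bound $\expect{\ell(x(k),u(k))}$ by a one-step difference of expected value functions plus $\tr{P_{\mathrm{f}}\Sigma_{\mathrm{w}}}$, sum over $k\in\mathbb{I}_{[0,K-1]}$ to telescope, and divide by $K$. Lower boundedness of $\ell$ together with $P_{\mathrm{f}}\succeq0$ (which follows from the Lyapunov equation with $Q,R\succeq0$ and $A_K$ Schur, so that $V_{\mathrm{f}}$ is also bounded below) ensures $\mathcal{J}_N^\star$ is bounded below by a constant; hence the boundary terms $-\tfrac{1}{K}\expect{\mathcal{J}_N^\star(x(K),z_{1|K-1}^\star)}$ and $\tfrac{1}{K}\mathcal{J}_N^\star(x(0),z_{1|-1}^\star)$ vanish in the $\limsup$, yielding the claimed average bound.

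The main obstacle is the second step: correctly identifying the candidate mean trajectory as the shifted optimal one perturbed by $A_K^iw(k)$, and then verifying that the quadratic expectation produces variance terms that telescope and collapse \emph{exactly} to $\tr{P_{\mathrm{f}}\Sigma_{\mathrm{w}}}$ through the terminal Lyapunov equation. The bookkeeping at the terminal stage (where $\lambda_{k+1}=0$ forces the nominal shift, the input switches to $Kx$, and the $V_{\mathrm{f}}$-decrease must cancel the $(N-1)$-th stage cost) is the most error-prone part, whereas the remaining manipulations are routine.
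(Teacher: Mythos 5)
Your proposal is correct and follows essentially the same argument as the paper's proof in Appendix~A: the shifted candidate with $\lambda_{k+1}=0$, the key identities $\overline{x}_{i|k+1}=\overline{x}_{i+1|k}^\star+A_K^iw(k)$, $\overline{u}_{i|k+1}=\overline{u}_{i+1|k}^\star+KA_K^iw(k)$, cancellation of the terminal stage via Assumption~\ref{ass:terminal_cost}, collapse of the variance corrections to $\tr{P_{\mathrm{f}}\Sigma_{\mathrm{w}}}$ through the Lyapunov identity $A_K^\top P_{\mathrm{f}}A_K+Q+K^\top RK=P_{\mathrm{f}}$, and a telescopic sum for \eqref{eq:performance_bound}. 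Your bookkeeping of the variance terms via the recursion \eqref{eq:variance_prediction} (yielding the residual $\tr{(Q+K^\top RK)\Sigma_{\mathrm{x},N}}+\tr{P_{\mathrm{f}}(\Sigma_{\mathrm{x},N+1}-\Sigma_{\mathrm{x},N})}$) is an equivalent reorganization of the paper's direct telescoping of the $\|A_K^iw(k)\|^2$ terms.
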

\begin{proof}
The proof follows from~\cite[Thm.~3]{hewing2020recursively}, where the same cost $\mathcal{J}_N$ and candidate solution is considered, compare Appendix~\ref{app:performance} for a detailed exposition. 
\end{proof}
Inequality~\eqref{eq:performance_bound} ensures that the asymptotic average performance is no worse than the performance of the linear controller $u=Kx$. 
For computational reasons, one can add a penalty on $\lambda_k^2$ to the cost function,
which penalizes $\|z_{0|k}-z_{1|k-1}^\star\|$ and hence does not affect the performance guarantees.

%!TEX root = ./SMPC_reach.tex
%%%%%%%%%%%%%%%%%%%%%%%%%%%%%%%%%%%%%%%%%%%%%%%%%%%%%%%%%%%%%%%%%%%%%%%%%%%%%%% 
\section{Discussion and comparison}
\label{sec:discussion}
First, we clarify the relation of the proposed SMPC approach to the case distinction SMPC~\cite{hewing2018stochastic} and the indirect feedback SMPC~\cite{hewing2020recursively}. 
Then, we consider an illustrative example to demonstrate the performance benefits of the proposed SMPC formulation. 
Finally, we contrast the presented proof and resulting guarantees to related results in the SMPC literature. 
%!TEX root = ./SMPC_reach.tex
%%%%%%%%%%%%%%%%%%%%%%%%%%%%%%%%%%%%%%%%%%%%%%%%%%%%%%%%%%%%%%%%%%%%%%%%%%%%%%% 
\subsubsection{Comparison to case distinction SMPC}
By considering the special case $\lambda_k\in\{0,1\}$, the proposed SMPC formulation is comparable to the case distinction SMPC proposed in~\cite{hewing2018stochastic}.  
First, considering the implementation, the proposed SMPC approach is clearly preferable since one QP needs to be solved while the case distinction SMPC might need to solve two QPs.  
Second, the tightened constraints and the theoretical results regarding closed-loop chance constraint satisfaction (Prop.~\ref{prop:chance_constraint}) are equivalent to~\cite{hewing2018stochastic}. 
Hence, regarding the set of feasible decisions, the proposed SMPC approach is more flexible as the full interval  $\lambda_k\in[0,1]$ is feasible, and not only the extreme values $\lambda_k\in\{0,1\}$, compare also the numerical example later (Sec.~\ref{sec:example_LQR}). 
The tightened constraints in other case distinction SMPC schemes~\cite{farina2013probabilistic,farina2015approach,farina2016model,li2021distributionally} can be less conservative as no symmetry condition is used (Ass.~\ref{ass:PRS}) and additionally the time index $k$ in tightened constraints $\overline{\mathcal{Z}}_{k+i}$~\eqref{eq:MPC_nominal_tightened} is re-initialized to $0$ whenever $\lambda_k^\star=1$.  
However, as also discussed in~\cite{hewing2018stochastic} and shown in an additional example in Appendix~\ref{app:example_symmetric}, such approaches do \textit{not} result in closed-loop chance constraint satisfaction.

It is important to note that there exists a subtle difference regarding the cost function $\mathcal{J}_N$ considered in present paper and the one typically used in case distinction SMPC formulations~\cite{hewing2018stochastic,farina2013probabilistic}. 
In particular, in~\cite{farina2013probabilistic}, the expected cost is not conditioned on the known measured state $x(k)$, but for $\lambda_k\neq 1$ it is conditioned on the previously predicted state distribution. As a result, the derived performance bound deteriorates with an increasing prediction horizon $N$ (cf.~\cite[Thm.~1]{farina2013probabilistic}, \cite[Thm.~1]{farina2015approach}).
In~\cite{hewing2018stochastic} (cf. also~\cite[Sec.~III.D]{farina2013probabilistic}, \cite{mark2020stochastic,mark2021stochastic}) it is suggested to use only the nominal state and input in the cost and use $\lambda_k=1$ whenever feasible. 
The resulting performance bound utilizes Lipschitz continuity of the value function and also fails to recover the performance of the linear feedback $u=Kx$. 
These established performance bounds for  the case distinction SMPC scheme ($\lambda_k\in\{0,1\}$) can be directly improved by adopting 
the proposed cost (Sec.~\ref{sec:theory_performance}) and the corresponding theoretical analysis (Prop.~\ref{prop:performance}), which ensures a performance no worse than the tube feedback $u=Kx$. 
 
%!TEX root = ./SMPC_reach.tex
%%%%%%%%%%%%%%%%%%%%%%%%%%%%%%%%%%%%%%%%%%%%%%%%%%%%%%%%%%%%%%%%%%%%%%%%%%%%%%% 
\subsubsection{Comparison to indirect feedback SMPC}
The indirect feedback SMPC~\cite{hewing2020recursively} can be directly recovered as a special case of the proposed SMPC by setting $\lambda_k=0$. 
Hence, given the same constraint tightening, the proposed SMPC is more flexible. 
However, the theoretical guarantees in the indirect feedback SMPC are more general, and are applicable to correlated disturbances~\cite{hewing2020recursively}, nonlinear dynamics (cf. \cite{hewing2019scenario}), and non-symmetric PRS~\cite{hewing2020recursively}.
An additional example investigating the trade-off between symmetric PRS and the relaxed initial state constraint can be found in Appendix~\ref{app:example_symmetric}.

%!TEX root = ./SMPC_reach.tex
%%%%%%%%%%%%%%%%%%%%%%%%%%%%%%%%%%%%%%%%%%%%%%%%%%%%%%%%%%%%%%%%%%%%%%%%%%%%%%% 
\begin{table*}
\caption{Example: average performance and chance constraint satisfaction.\vspace{-1mm}}
\begin{tabular}{c|c|c|c|c|c|c|c|c|c}
Method&LQR&Proposed&Case ($\lambda_k\in\{0,1\}$)&Indirect \cite{hewing2020recursively}&Nominal&$u=Kx$&Case~\cite[Alg.~1]{hewing2018stochastic}\\
\hline
$\expect{\ell}$&$75.0\%$&$82.0\%$&$82.3\%$&$83.0\%$&$91.1\%$&$100.0\%$&$153.0\%$\\
$\prob{(x,u)\in\mathcal{Z}}$&$68.3\%$&$82.6\%$&$82.7\%$&$82.7\%$&$100.0\%$&$91.7\%$&$93.5\%$
\end{tabular}
\label{tab:academic_LQR}
\vspace{-1mm}
\end{table*}
\subsubsection{Numerical example} 
\label{sec:example_LQR}
The following example can be motivated by an inventory control problem, which does not require hard input constraints. 
We consider an integrator $x(k+1)=x(k)+u(k)+w(k)$, $w\in\mathcal{N}(0,1)$, $x_0=0$, with $\ell(x,u)=x^2$, i.e., in the absence of constraints the optimal feedback is the deadbeat controller $u=K_{\mathrm{LQR}}x=-x$. 
The tube controller is given by $K=-0.5$ and the PRS $\mathcal{R}_k$ are constructed using the stationary variance $\Sigma_\infty$~\cite[Rk.~3]{hewing2018stochastic}.  
We have a probabilistic input constraint $\prob{u\in[-1,1]}\geq p=80.61\%$, which is chosen such that $\overline{\mathcal{Z}}_k=\{(x,u)|~u\in[-0.25,0.25]\}$, $k\in\mathbb{I}_{\geq 0}$.   
This ensures that a more aggressive feedback than $u=Kx$ is feasible. 
The resulting average performance and probabilistic constraint satisfaction can be seen in Table~\ref{tab:academic_LQR}. We compare the proposed SMPC in~\eqref{eq:MPC} with the new interpolating initial state constraint to the state feedback $u=Kx$, $u=K_{\mathrm{LQR}}x$, the indirect feedback~\cite{hewing2020recursively}, the case distinction SMPC ($\lambda_k\in\{0,1\}$), the case distinction SMPC implementation suggested in~\cite[Alg.~1]{hewing2018stochastic}, and a nominal MPC\footnote{%
The implementation in~\cite[Alg.~1]{hewing2018stochastic} chooses $\lambda_k^\star=1$ whenever feasible, compare also~\cite[Sec.~III.D]{farina2013probabilistic}. 
For the nominal MPC, we implemented the proposed SMPC formulation with $K=0$, which results in nominal input constraints with a somewhat relaxed terminal set constraint.}.
The implementations use a simple terminal equality constraint $\mathcal{X}_{\mathrm{f}}=\{0\}$, a prediction horizon of $N=10$, and are simulated over $40$ steps with $10^4$ random disturbance realizations $w$. 
As expected, all the SMPC formulations satisfy the chance constraints~\eqref{eq:chance_constraints} in closed loop (Prop.~\ref{prop:chance_constraint}) and achieve a performance no worse than the tube feedback $u=Kx$ (Prop.~\ref{prop:performance}). 
One exception is the implementation suggested in~\cite[Agl. 1]{hewing2018stochastic}, which provides an unnecessarily large probability of constraint satisfaction and thus results in a significant performance deterioration, compare also~\cite{hewing2020performance} for a detailed discussion of this effect.
Furthermore, the proposed approach ($\lambda_k\in[0,1]$) outperforms the case distinction ($\lambda_k\in\{0,1\}$), which outperforms the indirect feedback ($\lambda_k=0$). 
This can be best explained by the implicit input constraint: $u(k)\in (1-\lambda_k) K(x(k)-z_{1|k-1}^\star)\oplus[-0.25,0.25]$.
There, $\lambda_k=0$~\cite{hewing2020recursively} is restrictive if the tube feedback $K(x(k)-z_{1|k}^\star)$ pushes the state $x$ in the wrong direction and $\lambda_k\in\{0,1\}$~\cite{hewing2018stochastic} may result in a disjoint set for $x(k)-z_{1|k-1}^\star$ large. 
 
%!TEX root = ./SMPC_reach.tex
%%%%%%%%%%%%%%%%%%%%%%%%%%%%%%%%%%%%%%%%%%%%%%%%%%%%%%%%%%%%%%%%%%%%%%%%%%%%%%% 
\subsubsection{A comment on simpler arguments in the  SMPC literature} 
\label{sec:discussion_proof}
At first glance, the derivation in Proposition~\ref{prop:chance_constraint} and similarly in~\cite{hewing2018stochastic} based on monotone unimodality and symmetry might seem unnecessarily complicated in contrast to many SMPC approaches based on case distinction~\cite{farina2013probabilistic,farina2015approach,farina2016model,li2021distributionally,li2021chance}. 
However, as discussed in~\cite[Remark 1]{farina2016model}, \cite[Remark 7]{li2021distributionally} and shown in the example in~\cite[Sec.~V]{hewing2018stochastic}) (cf. also the example in Appendix~\ref{app:example_symmetric}), other approaches that drop these requirements typically do \textit{not}
satisfy the chance constraints in \textit{closed-loop}. 
To understand this difference, we exemplary revisit the proof in~\cite[Prop. 2]{li2021chance} for a stochastic reference governor, where the implicit arguments in many SMPC papers are made more rigorously.\footnote{%
The authors in~\cite{li2021chance} are currently preparing a \textit{corrigendum}, that avoids these arguments in the proof by also utilizing tools similar to~\cite{hewing2018stochastic}.
} 
Suppose we wish to show~\eqref{eq:chance_constraints} for some fixed but arbitrary $k$ with a case distinction SMPC ($\lambda_k\in\{0,1\}$). 
For any realization of $w$, we can define a time $\tau=\max\{k'\in\mathbb{I}_{[0,k]}|~\lambda_{k'}^\star=1\}$, i.e., the most recent time where the SMPC was re-initialized with the new measured state. 
Then, we write the chance constraints using conditional probabilities (cf.~\cite[Equ.~(43)]{li2021chance}):
\begin{align}
\label{eq:alternative_proof}
&\prob{(x(k),u(k))\in\mathcal{Z}}\\
=&\sum_{i=0}^k\prob{(x(k),u(k))\in\mathcal{Z}|\tau=i}\cdot \prob{\tau=i}.\nonumber
\end{align}
The initialization $\lambda_0^\star=1$ ensures $\sum_{i=0}^k\prob{\tau=i}=1$ and hence the desired bound~\eqref{eq:chance_constraints} follows directly from~\eqref{eq:alternative_proof} if $\prob{(x(k),u(k))\in\mathcal{Z}|\tau=i}\geq p$. 
By construction, feasibility of any properly designed SMPC at some time $i$ should imply chance constraint satisfaction for all future time (until the next re-conditioning).
However, this is not the same as conditioning on $\tau=i$.  
In particular, $\tau=i$ also implies that for all future times  $k'>i$, the SMPC with $x(k')=z_{0|k'}^\star$ is infeasible. 
Thus, we cannot use these simple arguments to show~\eqref{eq:chance_constraints} based on~\eqref{eq:alternative_proof}. 
With this discussion, we wish to highlight the need for the more involved arguments from Proposition~\ref{prop:error_closed_loop} and the symmetry/unimodality condition to ensure closed-loop chance constraint satisfaction.

\subsubsection{Summary}
The proposed SMPC is clearly superior to the \textit{case distinction} SMPC~\cite{hewing2018stochastic} due to the simpler implementation, less restrictive constraints, and better performance guarantees, while considering the exact same assumptions and design. 
Compared to the \textit{indirect feedback} SMPC~\cite{hewing2020recursively}, we achieved better performance due to the less restrictive constraints. 
However, the guarantees in~\cite{hewing2020recursively} are applicable under less restrictive conditions. 
In Appendix~\ref{app:example_symmetric}, we provide an additional example that investigates the conservatism of using symmetric PRS (Ass.~\ref{ass:PRS}). 
Therein, we empirically find that the additional degrees of freedom in the initial state constraints can compensate the conservatism of using symmetric PRS (Ass.~\ref{ass:PRS}). 
Thus, in the considered setting ($w$ i.i.d. \& Ass.~\ref{ass:disturbances_2} holds), the proposed SMPC  approach consistently outperformed the state-of-the-art SMPC formulations~\cite{hewing2018stochastic,hewing2020recursively}, with no clear drawback.
Lastly, we clarified that the restriction to symmetric PRS (Ass.~\ref{ass:PRS}) is indeed needed for \textit{closed-loop chance constraint satisfaction}, while some of the simpler existing proofs for case distinction SMPC (cf. \cite{farina2013probabilistic,farina2015approach,farina2016model,li2021distributionally,li2021chance}) might \textit{not} yield similar guarantees.   
%!TEX root = ./SMPC_reach.tex
%%%%%%%%%%%%%%%%%%%%%%%%%%%%%%%%%%%%%%%%%%%%%%%%%%%%%%%%%%%%%%%%%%%%%%%%%%%%%%%
\section{Conclusion}
\label{sec:conclusion}
We have presented an SMPC scheme for linear systems with possibly unbounded disturbances, providing closed-loop chance constraint satisfaction and suitable performance bounds.   
The proposed SMPC formulation specifically improves the case distinction in prior SMPC approaches~\cite{hewing2018stochastic} (and similarly in~\cite{farina2013probabilistic,farina2015approach}) addressing recursive feasibility by interpolating between two initial states.  
The resulting SMPC approach only requires the solution of one QP and is applicable to a rather general class of linear stochastic optimal control problems. 
We used a simple example to argue why the proposed SMPC formulation allows for more flexibility and hence improved performance compared to the case distinction SMPC~\cite{hewing2018stochastic} and the indirect feedback SMPC~\cite{hewing2020recursively}. 
We expect that many of the recent SMPC results that build on one of these two SMPC approaches for the initial state constraint (cf., e.g., \cite{farina2015approach,farina2016model,li2021distributionally,li2021chance,mark2020stochastic,mark2021stochastic,mark2021data}) can be improved/extended by adopting the proposed interpolation for the initial state.   
\subsubsection*{Acknowledgment}
We would like to thank our colleague K. Wabersich for helpful discussions on the initial idea.
\bibliographystyle{ieeetran}  
\bibliography{Literature} 
\appendix
%!TEX root = ./SMPC_reach.tex
%%%%%%%%%%%%%%%%%%%%%%%%%%%%%%%%%%%%%%%%%%%%%%%%%%%%%%%%%%%%%%%%%%%%%%%%%%%%%%% 
\subsection{Proof  - Proposition~\ref{prop:performance}}
\label{app:performance}
The result in Proposition~\ref{prop:performance} follows using the arguments in~\cite[Thm.~3]{hewing2020recursively} using  the same cost $\mathcal{J}_N$ and candidate solution.
The following proof is added to ensure a self-contained exposition. 
\begin{proof}
We first established the (expected) decrease condition on the value function~\eqref{eq:value_expected_next} and then show that this implies the performance bound
\eqref{eq:performance_bound}.\\
\textbf{Part I: }
We abbreviate the quadratic norm w.r.t. a positive semi-definite matrix $Q$ by $\|x\|_Q^2=x^\top Q x$.
Define $\overline{u}_{N|k}^\star:=K\overline{x}_{N|k}^\star$, $\overline{x}_{N|k+1}^\star:=(A+BK)\overline{x}_{N|k}^\star$. 
Using the feasible candidate solution from Proposition~\ref{prop:recursive_feasible}, the predicted mean is given by $\overline{x}_{i|k+1}=\overline{x}_{i+1|k}^\star+A_K^iw(k)$, $\overline{u}_{i|k+1}=\overline{u}_{i+1|k}^\star+KA_K^iw(k)$, $i\in\mathbb{I}_{[0,N-1]}$. 
Furthermore, since $K=K_{\mathrm{f}}$, we have $\overline{x}_{N|k+1}=\overline{x}_{N+1|k}^\star+A_K^Nw(k)$. 
Considering the cost~\eqref{eq:J_expected_2}, this implies
\begin{align*}
&\expect{\mathcal{J}_N^\star(x(k+1),z_{1|k}^\star)-\mathcal{J}_N^\star(x(k),z_{1|k-1}^\star)}\\
&+\ell(x(k),u(k))\\
\leq& \expect{\sum_{i=0}^{N-1}\ell(\overline{x}_{i|k+1},\overline{u}_{i|k+1})-\ell(\overline{x}_{i+1|k}^\star,\overline{u}^\star_{i+1|k})}\\
&+\expect{V_{\mathrm{f}}(\overline{x}_{N|k+1})\underbrace{-\ell(\overline{x}_{N|k}^\star,\overline{u}_{N|k}^\star)-V_{\mathrm{f}}(\overline{x}_{N|k}^\star)}_{\stackrel{\text{Ass.~}\ref{ass:terminal_cost}}{=}-V_{\mathrm{f}}(\overline{x}_{N+1|k}^\star)}}\\
=&\sum_{i=0}^{N-1}\expect{\|A_K^iw(k)\|_{Q+K^\top R K}^2}+\expect{\|A_K^Nw(k)\|_{P_{\mathrm{f}}}^2}\\
\stackrel{\text{Ass.~}\ref{ass:terminal_cost}}{=} &\expect{\|w(k)\|_{P_{\mathrm{f}}}^2}=\tr{P_{\mathrm{f}}\Sigma_{\mathrm{w}}},
\end{align*}
where the second to last equality used $A_K^\top P_{\mathrm{f}}A_K+Q+K^\top R K= P_{\mathrm{f}}$ from Assumption~\ref{ass:terminal_cost} in a telescopic sum. \\
%The proof follows standard arguments form stochastic MPC \cite[Cor.~2]{hewing2018stochastic}, \cite[Thm.~3]{hewing2020recursively}.\\
\textbf{Part II: } Given that $\ell$ is lower bounded and $N,\Sigma_{\mathrm{w}}$ are finite, $\mathcal{J}^\star_N$ admits a lower bound. 
Hence we can use~\eqref{eq:value_expected_next} in a telescopic sum to arrive at the average performance bound~\eqref{eq:performance_bound}, compare, e.g., \cite[Cor.~2]{hewing2018stochastic}.
\end{proof}

%!TEX root = ./SMPC_reach.tex
%%%%%%%%%%%%%%%%%%%%%%%%%%%%%%%%%%%%%%%%%%%%%%%%%%%%%%%%%%%%%%%%%%%%%%%%%%%%%%% 
\subsection{Example - conservatism of symmetric sets}
\label{app:example_symmetric}
It is well known that in case of individual half-space constraints,  less conservative bounds can be obtained by using (non-symmetric) half-space based PRS~\cite{lorenzen2016constraint},\cite[Sec.~3.2]{hewing2020recursively}.
In the following, we empirically investigate the conservatism of symmetric PRS (Ass.~\ref{ass:PRS}). 

We consider a scalar example $x(k+1)=0.75x(k)+u(k)+w(k)$, $w\in\mathcal{N}(0,1)$, with a probabilistic constraint $\prob{x\geq -2}\geq p=81.4\%$, no stabilizing feedback ($K=0$), and stage cost $\ell(x,u)=u$. 
The example is chosen such that the tightened constraint $\overline{\mathcal{Z}}_\infty=\{(x,u)|x\geq 0\}$ is always active. 
In this example, the case-distinction SMPC and the proposed SMPC are equivalent and hence we only compare to the indirect feedback SMPC. 
In particular, we consider both approaches using either symmetric or non-symmetric PRS.
The results can be seen in Figure~\ref{fig:academic}.
\begin{figure}[hbtp]
\begin{center}
\includegraphics[width=0.395\textwidth]{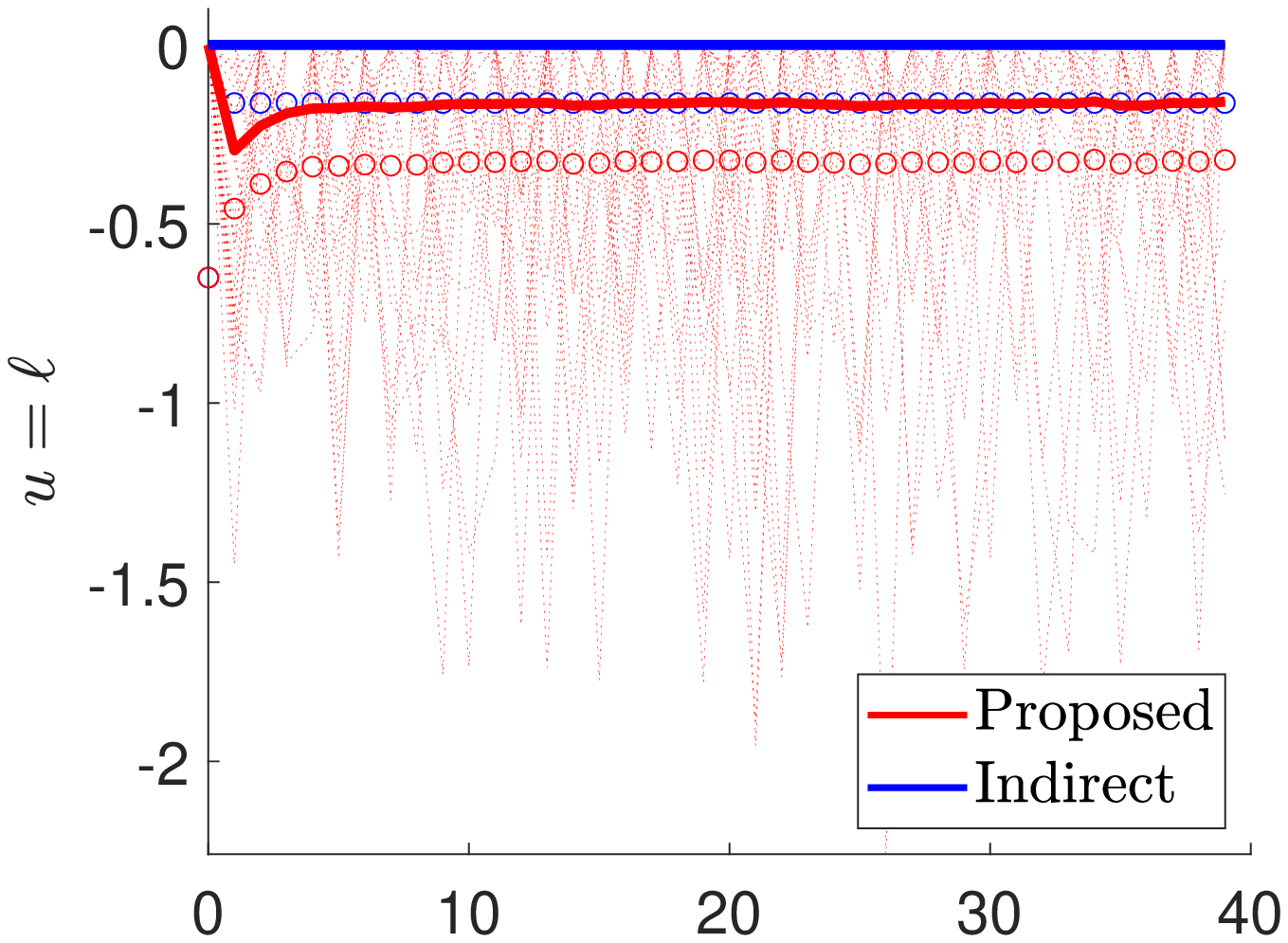}
\includegraphics[width=0.395\textwidth]{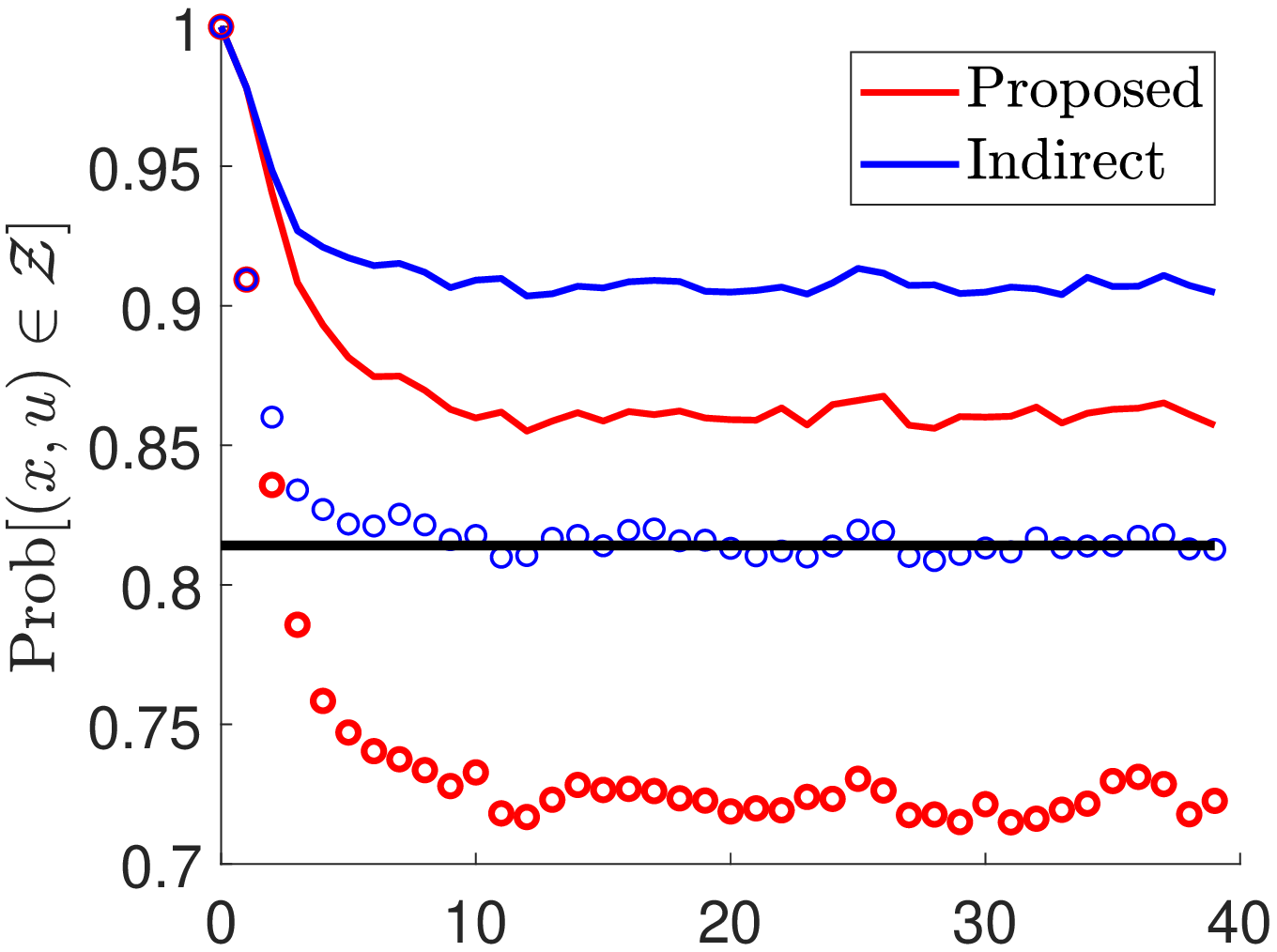}
\end{center}
\caption{Closed loop input $u$ and empirical probability of constraint violation $\prob{(x(k),u(k))\in\mathcal{Z}}$ over time $k\in\mathbb{I}_{[0,40]}$ for proposed (red) and indirect feedback (blue) SMPC. 
Mean trajectories are shown as solid lines and $40$ exemplary random realizations as dashed lines.
The mean for the SMPC implementations with  non-symmetric half-space PRS are shown in circles.} 
\label{fig:academic}
\end{figure}
Regarding the indirect feedback SMPC~\cite{hewing2020recursively}: we simply have $u\equiv 0$, and we cannot take advantage of beneficial disturbances.
If we consider the non-symmetric PRS in the indirect feedback SMPC, then we instead have $u\equiv -0.1624$ (except for $k=0$), which demonstrates the reduction in conservatism of using non-symmetric PRS.
For the proposed SMPC with symmetric PRS, the closed-loop input $u$ depends on the realized disturbances $w$. 
In fact, whenever a "positive" disturbance appears (both in the mathematical sense and in the practical sense that they are beneficial), then we re-set the initial state.
Hence, we can implicitly use the fact that at least $50\%$ of the disturbances are beneficial, which is also used in the non-symmetric PRS. 
Empirically, we achieve (on average) a similar performance (cf. Fig.~\ref{fig:academic}), i.e., the additional degree of freedom in the initial state results in a similar improvement as a non-symmetric PRS. 
Although the mean trajectories and hence expected performance is almost equivalent, the closed-loop realizations and statistics differ significantly. 
For example, the indirect-feedback SMPC exactly matches the prescribed chance constraint specification, while the proposed approach achieves a higher level of constraint satisfaction. 
Given the benefits of non-symmetric PRS, one might be tempted to also implement the proposed SMPC using non-symmetric PRS. 
However, in this case the proposed approach and the case distinction approach in~\cite{hewing2018stochastic} empirically fail to meet the chance constraint specification~\eqref{eq:chance_constraints} (cf. Fig.~\ref{fig:academic}) since the symmetry condition (Ass.~\ref{ass:PRS}) is violated.     
\end{document}